\documentclass[ASNA,twocolumn]{USG}
\usepackage{anyfontsize}
\usepackage{colortbl}
\usepackage{xcolor}
\usepackage{booktabs}
\usepackage{steinmetz}

\newcommand{\sgn}{\operatorname{sgn}}
\newcommand{\Sgn}{\operatorname{Sgn}}
\newcommand{\spow}[2]{\left\lceil #1 \right\rfloor^{#2}}

\graphicspath{{./images/}}

\articletype{SHORT ARTICLE}

\received{\today}
\revised{-}
\accepted{-}
\journal{Journal}
\volume{0}
\copyyear{2026}
\startpage{1}
\articledoi{}

\begin{document}

\title{Zero-Crossing Time Lower Bounds for the Homogeneous Super-Twisting Algorithm}
\transtitle{Zero-Crossing Time Lower Bounds for the Homogeneous Super-Twisting Algorithm}
\subtranstitle{trans-subtitle}
\author[1]{Diego Guti\'errez-Oribio}[https://orcid.org/0000-0002-5895-3416]
\author[1]{Ioannis Stefanou}[https://orcid.org/0000-0002-4552-7717]

\authormark{Guti\'errez-Oribio and Stefanou}
\titlemark{Zero-Crossing Time Lower Bounds for the Homogeneous Super-Twisting Algorithm}

\address[1]{\orgdiv{IMSIA (UMR 9219), CNRS, EDF, CEA, }\orgname{ENSTA Paris, Institut Polytechnique de Paris, }%
\orgaddress{\state{Palaiseau, }\country{France}}}

\corres{Diego Guti\'errez-Oribio  (\email{diego.gutierrez@ensta.fr}) ~|~ Ioannis Stefanou (\email{ioannis.stefanou@ensta.fr})}

\editor{\textbf{Academic Editor:} - ~|~ \textbf{Guest Editor:} -}

\presentaddress{IMSIA (UMR 9219), CNRS, EDF, CEA, ENSTA Paris, Institut Polytechnique de Paris, Palaiseau, France.}

\fundingInfo{European Research Council's (ERC) grant INJECT, no. 101087771, doi: 10.3030/101087771.}

\keywords{Sliding mode control | Nonlinear Control | Robust Control | Super-twisting algorithm | Zero-crossing estimates}

\transkeywords{Sliding mode control | Nonlinear Control | Robust Control | Super-twisting algorithm | Zero-crossing estimates}

\abstract[ABSTRACT]{
High-frequency switching and chattering are well-known problems in sliding-mode control. Although the super-twisting algorithm produces a continuous control input, the sliding variable may still cross zero repeatedly near the sliding manifold. This paper studies how fast these crossings can occur and whether they can accumulate in finite time for the scalar homogeneous super-twisting algorithm. Two lower bounds on the time required to return to zero are derived, together with a bound on the time needed to complete a full excursion. These results give minimum separation times and upper bounds on the number of crossings associated with excursions above a prescribed amplitude. Although zero crossings may accumulate in finite time, the amplitudes of the complete excursions considered here must vanish and satisfy a summability condition. Thus, crossings associated with excursions above any prescribed nonzero amplitude cannot accumulate in finite time. When the perturbation-rate bound is zero, the estimates recover the scaling imposed by homogeneity. For the classical super-twisting algorithm, the square-root scaling also holds for nonzero perturbation-rate bounds. Numerical simulations illustrate the results.
}

\transabstract[transABSTRACT]{
High-frequency switching and chattering are well-known problems in sliding-mode control. Although the super-twisting algorithm produces a continuous control input, the sliding variable may still cross zero repeatedly near the sliding manifold. This paper studies how fast these crossings can occur and whether they can accumulate in finite time for the scalar homogeneous super-twisting algorithm. Two lower bounds on the time required to return to zero are derived, together with a bound on the time needed to complete a full excursion. These results give minimum separation times and upper bounds on the number of crossings associated with excursions above a prescribed amplitude. Although zero crossings may accumulate in finite time, the amplitudes of the complete excursions considered here must vanish and satisfy a summability condition. Thus, crossings associated with excursions above any prescribed nonzero amplitude cannot accumulate in finite time. When the perturbation-rate bound is zero, the estimates recover the scaling imposed by homogeneity. For the classical super-twisting algorithm, the square-root scaling also holds for nonzero perturbation-rate bounds. Numerical simulations illustrate the results.
}

\abbr{}
\contributed{}
\dedicated{}
\copyright{}

\maketitle

\section{Introduction}

The super-twisting algorithm (STA) is a second-order sliding-mode control for relative-degree-one systems \cite{b:levant_ijc1993}. It preserves continuity of the applied control while establishing the sliding motion in finite time under the classical robustness conditions. Homogeneity provides a unified construction of the STA and its continuous extensions \cite{b:levant_automatica2005,b:LEVANT2007576,b:Mathey-Moreno-2024}, whereas strict Lyapunov functions provide constructive stability conditions and convergence-time estimates \cite{b:PolyakovPoznyak2009ReachingTime,b:6144710}. Reaching-time analyses have also been developed directly from the STA trajectories. In particular, an exact reaching-time function and a semi-explicit trajectory representation are available for the unperturbed classical STA, where the zero structure of the transformed trajectory is treated explicitly \cite{b:SeeberHornFridman2018ReachingTime}.

High-frequency switching and chattering are among the main practical and theoretical concerns associated with sliding-mode control. Although the STA produces a continuous applied control, this does not eliminate fast activity of the sliding variable: repeated sign-changing zero crossings may still occur near the sliding manifold. Moreover, actuator dynamics, sampling, delays, quantization, and other parasitic effects may generate sustained oscillations in practical implementations \cite{b:8896941,b:SEEBER2025112027}. Describing-function and harmonic-balance analyses quantify the amplitude and frequency of such implementation-induced limit cycles once the neglected dynamics are explicitly included \cite{b:BoikoFridman2005,b:RosalesFreidovichCastillo2023,b:PEREZVENTURA2019175}. These results provide valuable implementation-oriented chattering estimates, but they do not characterize the temporal structure of the zero crossings generated by the continuous-time STA trajectory itself.

Repeated oscillations across the sliding manifold generate a sequence of sign-changing zero crossings. The temporal spacing of these crossings therefore characterizes the time scale of the oscillatory motion. The zero-crossing instants can be regarded as events generated by the closed-loop dynamics. Positive lower bounds on inter-event times are central in hybrid and event-triggered control because they exclude Zeno executions \cite{b:GoebelSanfeliceTeel2012,b:YuBarbotZeno2011}. Event-triggered sliding-mode and super-twisting designs obtain such bounds for controller updates generated by an imposed triggering mechanism \cite{b:LiEtAl2020EventTriggeredSTA,b:KumarMohantyKumar2020EventTriggeredSTA,b:DouDingYu2023EventTriggeredSOSM}. In the present setting, however, the events are not externally imposed: they are the sign-changing zero crossings generated by the continuous-time STA trajectory itself. Their separation must therefore be inferred directly from the intrinsic closed-loop dynamics rather than from a designed sampling or triggering rule. Zero-crossing instants have also been used to drive STA gain adaptation \cite{b:AndersonEtAl2023ZeroCrossing}. In that setting, the crossings determine the gain update. 

The objective of this work is to quantify fast zero-crossing activity in the STA through excursion-dependent lower bounds on the return time. This differs from the exact unperturbed analysis in \cite{b:SeeberHornFridman2018ReachingTime}, where the zero structure is resolved as part of the total reaching-time computation. Two complementary bounds are derived: the first is uniform with respect to the uncertain positive input coefficient, while the second exploits its known upper bound and can provide a sharper estimate. Neither requires measuring or reconstructing the matched perturbation, the disturbance-compensated coordinate, or the crossing velocity.

The qualitative impossibility of accumulating excursions bounded away from zero on a finite interval already follows from continuity, but the contribution here is quantitative. The bounds provide an explicit minimum separation for crossings associated with a prescribed excursion threshold, together with finite-count and cumulative estimates. For a perturbation with zero rate bound, they recover the scaling imposed by homogeneity, while for the classical super-twisting algorithm the square-root dependence on the excursion amplitude remains valid for arbitrary admissible perturbation rates. The cumulative result further yields a summability property for the excursion amplitudes and characterizes the effective input at a possible accumulation point.

The remainder of the paper is organized as follows. Section~\ref{sec:problem} introduces the plant--controller interconnection and the zero-crossing intervals. Section~\ref{sec:main} establishes the excursion-dependent bounds and their consequences. Section~\ref{sec:numerics} presents the numerical illustration. Section~\ref{sec:discussion} positions the result with respect to stability, chattering, and event-triggered analyses. Concluding remarks are given in Section~\ref{sec:conclusion}.

\section{Problem setting}
\label{sec:problem}

Consider the scalar plant
\begin{equation}
\label{eq:plant}
\dot \sigma(t)=a(t)\bigl(u(t)+\phi(t)\bigr),
\end{equation}
where $\sigma$ is the state, $u$ is the control input and $\phi$ is a matched perturbation. The homogeneous STA is defined by
\begin{equation}
\label{eq:controller}
\begin{aligned}
u(t)=-k_1\spow{\sigma(t)}{\alpha}+\nu(t),\qquad
\dot\nu(t)=-k_2\spow{\sigma(t)}{\beta},
\end{aligned}
\end{equation}
where $\nu$ is the controller integral state, $k_1,k_2>0$, $\alpha=\frac{1}{1-\ell}$, $\beta=\frac{1+\ell}{1-\ell}$, and $\ell\in[-1,0]$ \cite{b:LEVANT2007576,b:Mathey-Moreno-2024}. For $\gamma>0$, the signed power is defined as $\spow{s}{\gamma}:=|s|^\gamma\sgn(s)$. For $\gamma=0$, $\spow{s}{0}:=\sgn(s)$ for $s\neq0$, with the Filippov extension $\Sgn(0)=[-1,1]$ at the origin.

The input coefficient is assumed measurable and satisfies
\begin{equation}
\label{eq:a-bounds}
0<a_m\le a(t)\le a_M
\end{equation}
for almost all $t\ge0$. Thus, its magnitude may vary with time, but its sign is known and strictly positive. The perturbation $\phi$ is locally absolutely continuous and satisfies
\begin{equation}
\label{eq:phi-derivative-bound}
|\dot\phi(t)|\le\delta
\end{equation}
for almost all $t\ge0$, and $\delta\ge0$ is known.

Introducing the disturbance-compensated coordinate $z(t):=\nu(t)+\phi(t)$ and denoting $d(t):=\dot\phi(t)$, the closed-loop dynamics are
\begin{equation}
\label{eq:closed-loop}
\begin{aligned}
\dot\sigma=a(t)\bigl(-k_1\spow{\sigma}{\alpha}+z\bigr),\qquad
\dot z=-k_2\spow{\sigma}{\beta}+d(t).
\end{aligned}
\end{equation}
For $\ell=-1$, one has $\alpha=1/2$ and $\beta=0$, and the
second equation is understood in the Filippov sense \cite{b:filippov} as $\dot z(t)\in -k_2\operatorname{Sgn}(\sigma(t))+d(t)$. Consequently, every admissible selection satisfies $|\dot z(t)|\leq k_2+\delta$ almost everywhere. Values $\ell\in(-1,0]$ give the continuous homogeneous extensions, with the endpoint $\ell=0$ corresponding to a linear proportional--integral structure.

Stability and robustness conditions for the classical STA and its homogeneous extensions are available in \cite{b:levant_automatica2005,b:LEVANT2007576,b:Mathey-Moreno-2024,b:6144710,b:Seeber-Horn}. The present analysis is conditional on the existence of a forward solution of \eqref{eq:closed-loop}. It does not establish new stability or robustness conditions. Its purpose is to characterize the zero-crossing time associated with a given excursion of that trajectory.

\section{Main results}
\label{sec:main}

Define the effective input acting on the sliding variable as
\begin{equation}
\label{eq:w-definition}
w(t):=u(t)+\phi(t)
=-k_1\spow{\sigma(t)}{\alpha}+z(t).
\end{equation}
Then
\begin{equation}
\label{eq:sigma-w}
\dot\sigma(t)=a(t)w(t)
\end{equation}
for almost all $t$. The variables $z$ and $w$ are continuous along every forward solution. In particular, at any zero $t_i$ of $\sigma$,
\begin{equation}
\label{eq:w-z-at-zero}
w(t_i)=u(t_i)+\phi(t_i)=z(t_i).
\end{equation}

Under the preceding assumptions, $\dot\sigma$ is locally essentially bounded along every forward solution. Indeed, $\sigma$ is continuous on compact intervals. For $\beta>0$, the second equation of \eqref{eq:closed-loop} then implies local essential boundedness of $\dot z$. For $\beta=0$, the Filippov right-hand side is bounded by $k_2+\delta$. Consequently, $z$, $w$, and $a w$ are bounded on compact intervals.

\begin{definition}[Sign-changing zero crossing]
\label{def:crossing}
A time $t_i$ is an upward zero crossing of $\sigma$ if $\sigma(t_i)=0$ and there exists $r_i>0$ such that $\sigma(t)<0$, $t\in(t_i-r_i,t_i)$, $\sigma(t)>0$, $t\in(t_i,t_i+r_i)$. A downward zero crossing is defined analogously, with the signs reversed. Tangential contacts and intervals on which $\sigma$ remains identically zero are not counted as zero crossings (no sign change).
\end{definition}

\begin{definition}[Zero-crossing interval and excursion amplitude]
\label{def:crossing-interval}
Let $t_i<t_{i+1}$ be two consecutive sign-changing zero crossings. The interval $[t_i,t_{i+1}]$ is called a zero-crossing interval if there exists $q_i\in\{-1,1\}$ such that $q_i\sigma(t)>0$, $t\in(t_i,t_{i+1})$. Its excursion amplitude is $A_i:=\max_{t\in[t_i,t_{i+1}]}|\sigma(t)|$. The sign $q_i$ is the sign of $\sigma$ between the two crossings.
\end{definition}

Since $\sigma$ is continuous and changes sign at both endpoints, every zero-crossing interval satisfies $A_i>0$. Moreover, the compactness of $[t_i,t_{i+1}]$ guarantees the existence of at least one time at which $|\sigma|$ reaches $A_i$.

The first result relates the direction of a zero crossing to the sign of the effective input.

\begin{proposition}[Sign of the effective input at a zero crossing]
\label{prop:crossing-sign}
Let $t_i$ be a sign-changing zero crossing. Then $z(t_i)\ge0$ at an upward crossing and $z(t_i)\le0$ at a downward crossing.
\end{proposition}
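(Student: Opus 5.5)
We argue by contradiction, using only the continuity of $w$ and the relation \eqref{eq:sigma-w}. Consider an upward crossing $t_i$ and suppose $z(t_i)<0$. By \eqref{eq:w-z-at-zero}, $w(t_i)=z(t_i)<0$. Since $w=-k_1\spow{\sigma}{\alpha}+z$ is continuous along the forward solution, as noted above, there exists $\varepsilon\in(0,r_i)$ such that $w(t)\le z(t_i)/2<0$ for all $t\in[t_i,t_i+\varepsilon]$.

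We then integrate \eqref{eq:sigma-w}. The solution $\sigma$ is absolutely continuous, and $\dot\sigma=a w$ holds almost everywhere with $a w$ locally essentially bounded. Hence, for $t\in(t_i,t_i+\varepsilon]$,
\begin{equation*}
\sigma(t)=\sigma(t_i)+\int_{t_i}^{t}a(s)w(s)\,ds\le \frac{z(t_i)}{2}\,a_m\,(t-t_i)<0,
\end{equation*}
where we used $\sigma(t_i)=0$, the bound $a(s)\ge a_m>0$ from \eqref{eq:a-bounds}, and $w(s)<0$. This contradicts the requirement in Definition~\ref{def:crossing} that $\sigma(t)>0$ on $(t_i,t_i+r_i)$. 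Therefore $z(t_i)\ge0$.

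The downward case is symmetric. Either replace $(\sigma,z,w)$ by $(-\sigma,-z,-w)$, which reverses all the inequalities, or run the same argument with $w(t)\ge z(t_i)/2>0$ on a right neighborhood of $t_i$. This gives $\sigma>0$ just after $t_i$, which contradicts the definition of a downward crossing. Either way, $z(t_i)\le0$.

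The main point needing care is justifying the continuity of $w$ and the integral representation of $\sigma$ in the Filippov case $\ell=-1$. There, $z$ is absolutely continuous because $|\dot z|\le k_2+\delta$ almost everywhere, and $\spow{\sigma}{1/2}$ is continuous, so $w$ is continuous. The sign of $a$ is also essential: the argument fails without the strict lower bound $a_m>0$.

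Finally, the inequalities are not strict. Equality $z(t_i)=0$ cannot be excluded, since a crossing with $w(t_i)=0$ is possible, for instance with $\sigma$ behaving like a cubic near $t_i$. The proof therefore yields only the weak inequalities, which is exactly what is claimed.
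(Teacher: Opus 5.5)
Your proof is correct and takes the same route as the paper's. You use $w(t_i)=z(t_i)$, the continuity of $w$, and $a(t)\ge a_m>0$ to force $\sigma$ to the wrong sign just after $t_i$, handle the downward case symmetrically, and simply make explicit the integral estimate $\sigma(t)\le a_m z(t_i)(t-t_i)/2<0$ that the paper compresses into ``$\sigma$ would strictly decrease near $t_i$.''
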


\begin{proof}
At a zero crossing, \eqref{eq:w-z-at-zero} gives $z(t_i)=w(t_i)$. If an upward crossing had $w(t_i)<0$, continuity of $w$ and $a(t)\ge a_m>0$ would make $\sigma$ strictly decrease near $t_i$, which is a contradiction. Thus $z(t_i)\ge0$. The downward case is analogous.
\end{proof}

\begin{lemma}[Value of the effective input when the excursion amplitude is reached]
\label{lem:maximum-relation}
Consider a zero-crossing interval $[t_i,t_{i+1}]$ with sign $q_i$ and excursion amplitude $A_i$. Let $t_i^A\in(t_i,t_{i+1})$ be a time at which
\begin{equation}
\label{eq:maximum-time}
|\sigma(t_i^A)|=A_i.
\end{equation}
Then, $w(t_i^A)=0$ and
\begin{equation}
\label{eq:z-at-maximum}
q_i z(t_i^A)=k_1A_i^\alpha.
\end{equation}
\end{lemma}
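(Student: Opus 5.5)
The idea is to treat $t_i^A$ as an interior extremum of the absolutely continuous function $\sigma$. We then use continuity of $w$ and the sign condition $a(t)\ge a_m>0$ to force $w(t_i^A)=0$, exactly as in the proof of Proposition~\ref{prop:crossing-sign}. Once $w(t_i^A)=0$ is known, \eqref{eq:z-at-maximum} follows by evaluating \eqref{eq:w-definition} at $t_i^A$.

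First, since $q_i\sigma>0$ on $(t_i,t_{i+1})$ and $|\sigma(t_i^A)|=A_i$, we have $q_i\sigma(t_i^A)=A_i$. Hence $q_i\sigma(t)\le A_i=q_i\sigma(t_i^A)$ for all $t\in[t_i,t_{i+1}]$, so $q_i\sigma$ attains an interior maximum at $t_i^A$.

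Next, suppose $q_i w(t_i^A)>0$. By continuity of $w$ there is $\varepsilon>0$ such that $[t_i^A,t_i^A+\varepsilon]\subset(t_i,t_{i+1})$ and $q_iw(t)\ge c>0$ on that interval. Using \eqref{eq:sigma-w} and \eqref{eq:a-bounds},
\[
q_i\sigma(t_i^A+\varepsilon)-q_i\sigma(t_i^A)=\int_{t_i^A}^{t_i^A+\varepsilon}a(s)\,q_iw(s)\,ds\ge a_m c\,\varepsilon>0,
\]
which contradicts the maximality of $q_i\sigma$ at $t_i^A$. Symmetrically, if $q_iw(t_i^A)<0$, the same integral over $[t_i^A-\varepsilon,t_i^A]$ shows $q_i\sigma(t_i^A-\varepsilon)>q_i\sigma(t_i^A)$, again a contradiction. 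Therefore $w(t_i^A)=0$. The integral representation is legitimate because $\sigma$ is absolutely continuous on compact intervals with $\dot\sigma=aw$ a.e., and $aw$ is locally essentially bounded, as noted before Definition~\ref{def:crossing}.

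Finally, \eqref{eq:w-definition} at $t_i^A$ gives $z(t_i^A)=k_1\spow{\sigma(t_i^A)}{\alpha}=k_1A_i^\alpha q_i$, since $\sgn(\sigma(t_i^A))=q_i$ and $|\sigma(t_i^A)|=A_i$. Multiplying by $q_i$ and using $q_i^2=1$ yields \eqref{eq:z-at-maximum}. No step is a real obstacle. The only point needing care is that $\dot\sigma$ exists only almost everywhere (the coefficient $a$ is merely measurable), so the usual Fermat argument must be replaced by the integral estimate above, which relies on continuity of $w$ rather than of $\dot\sigma$.
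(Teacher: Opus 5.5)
Your proof is correct and follows the same route as the paper: rule out $q_iw(t_i^A)\neq0$ using continuity of $w$ and $a\ge a_m>0$ (forward in time for one sign, backward for the other), then evaluate \eqref{eq:w-definition} at $t_i^A$. Your integral estimate just makes explicit the paper's brief statement that $q_i\sigma$ would increase just after, or be larger just before, $t_i^A$.
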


\begin{proof}
Since $q_i\sigma(t_i^A)=A_i$, $q_iw(t_i^A)>0$ would, by continuity of $w$ and $a(t)\ge a_m>0$, make $q_i\sigma$ increase just after $t_i^A$, contradicting the definition of $A_i$. If $q_iw(t_i^A)<0$, the same argument backward in time gives a larger value before $t_i^A$. Hence $w(t_i^A)=0$. Using $\sigma(t_i^A)=q_iA_i$ in \eqref{eq:w-definition} gives \eqref{eq:z-at-maximum}.
\end{proof}

\begin{theorem}[Excursion-dependent zero-crossing time bounds]
\label{thm:main}
Consider a zero-crossing interval $[t_i,t_{i+1}]$ with excursion amplitude $A_i$, and define $M(A):=k_2A^\beta+\delta$. If $t_i^A$ is any time satisfying \eqref{eq:maximum-time}, then the return time satisfies
\begin{equation}
\label{eq:return-time-combined}
t_{i+1}-t_i^A
\ge
\max\{\tau(A_i),\tau_a(A_i)\},
\end{equation}
where $\tau(A):=\frac{k_1A^\alpha}{M(A)}$, and
\begin{equation}
\label{eq:a-dependent-return-bound}
\tau_a(A):=
\sqrt{\frac{2A}{a_M M(A)}}.
\end{equation}
Moreover, the rise time satisfies
\begin{equation}
\label{eq:rise-time-bound}
t_i^A-t_i\ge\tau_r(A_i),
\end{equation}
with
\begin{equation}
\label{eq:tau-r}
\tau_r(A):=
\frac{
\sqrt{k_1^2A^{2\alpha}+2M(A)A/a_M}
-k_1A^\alpha
}{M(A)}.
\end{equation}
Consequently,
\begin{equation}
\label{eq:complete-interval-bound}
t_{i+1}-t_i
\ge
\Theta(A_i)
:=
\tau_r(A_i)+\max\{\tau(A_i),\tau_a(A_i)\}.
\end{equation}
\end{theorem}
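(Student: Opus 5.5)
The plan rests on one observation. On a zero-crossing interval the sliding variable satisfies $|\sigma(t)|\le A_i$, so the second equation of \eqref{eq:closed-loop} gives $|\dot z(t)|\le k_2A_i^\beta+\delta=M(A_i)$ for almost all $t\in[t_i,t_{i+1}]$. For $\beta>0$ this is immediate. For $\beta=0$ it is the Filippov bound $k_2+\delta$, which agrees with $M(A_i)$ because $A_i^0=1$. Since $z$ is absolutely continuous, $z$ is $M(A_i)$-Lipschitz on the interval. All three bounds will then follow by combining this Lipschitz bound with the endpoint information from Proposition~\ref{prop:crossing-sign} and Lemma~\ref{lem:maximum-relation}, and with $\dot\sigma=a w$, $a_m\le a\le a_M$. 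Throughout I write $q=q_i$, $A=A_i$, $M=M(A_i)$.

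\emph{Bound $\tau$.} Lemma~\ref{lem:maximum-relation} gives $qz(t_i^A)=k_1A^\alpha$. The crossing at $t_{i+1}$ leaves the sign $q$, so it is downward if $q=1$ and upward if $q=-1$. In either case Proposition~\ref{prop:crossing-sign} gives $qz(t_{i+1})\le0$. Hence $qz$ must decrease by at least $k_1A^\alpha$ at rate at most $M$, which yields $t_{i+1}-t_i^A\ge k_1A^\alpha/M=\tau(A)$.

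\emph{Bound $\tau_a$.} For $s=t-t_i^A\ge0$, the Lipschitz bound gives $qz(t)\ge k_1A^\alpha-Ms$. Since $k_1|\sigma|^\alpha\le k_1A^\alpha$ on the interval, $qw(t)=-k_1|\sigma(t)|^\alpha+qz(t)\ge -Ms$. To pass to $q\dot\sigma$ I split into two cases:
\begin{itemize}
\item where $qw<0$, one has $q\dot\sigma=a\,qw\ge a_M\,qw\ge -a_MMs$;
\item where $qw\ge0$, one has $q\dot\sigma\ge0\ge -a_MMs$.
\end{itemize}
Integrating gives $q\sigma(t)\ge A-\tfrac12 a_MMs^2$, which is positive for $s<\sqrt{2A/(a_MM)}$. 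Therefore $t_{i+1}-t_i^A\ge\tau_a(A)$. Combined with the previous step, this gives \eqref{eq:return-time-combined}.

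\emph{Rise time.} I would run the same argument backward from $t_i^A$. For $r=t_i^A-t\ge0$, $qz(t)\le k_1A^\alpha+Mr$. Since $q\sigma>0$ on $(t_i,t_i^A)$, the term $-k_1|\sigma|^\alpha$ is nonpositive, so $qw\le k_1A^\alpha+Mr$. Using $a\le a_M$ and the fact that this bound is nonnegative (cases $qw\ge0$ and $qw<0$ as above), I get $q\dot\sigma\le a_M(k_1A^\alpha+Mr)$. Set $T=t_i^A-t_i$ and integrate from $t_i$ to $t_i^A$:
\[
A=\int_{t_i}^{t_i^A}q\dot\sigma\,dt\le a_M\Bigl(k_1A^\alpha T+\tfrac12 MT^2\Bigr).
\]
This is a quadratic inequality in $T$, and its positive root is exactly $\tau_r(A)$ in \eqref{eq:tau-r}, which gives \eqref{eq:rise-time-bound}. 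Adding the rise and return bounds gives \eqref{eq:complete-interval-bound}.

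The main point needing care is the sign handling in the $a$-dependent estimates. The factor $a$ can only be replaced by $a_M$ in the correct direction after splitting according to the sign of $qw$, and that step requires the bounds on $qw$ to have a definite sign. A secondary point is justifying the integration steps, which use absolute continuity of $\sigma$ and $z$ along Filippov solutions. If $M=0$ (which needs $\delta=0$ and $k_2=0$, excluded here), the formulas degenerate, so I will simply note that $M>0$.
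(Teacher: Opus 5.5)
Your proposal is correct and follows the paper's proof essentially step for step: it uses the same endpoint values of $q_iz$ from Proposition~\ref{prop:crossing-sign} and Lemma~\ref{lem:maximum-relation}, the same rate bound $M(A_i)$ on $\dot z$, and the same sign split on $q_iw$ to replace $a(t)$ by $a_M$ in the return and rise estimates. The only cosmetic difference is that the paper writes the $\tau$ step as the integral identity \eqref{eq:integral-change}, which keeps the integrand $k_2|\sigma|^\beta$ and is later reused for the cumulative estimate, whereas your uniform Lipschitz bound on $z$ is enough for the theorem itself.
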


\begin{proof}
On $(t_i,t_{i+1})$, $\spow{\sigma}{\beta}=q_i|\sigma|^\beta$, so $q_i\dot z=-k_2|\sigma|^\beta+q_id$ almost everywhere. Proposition~\ref{prop:crossing-sign} gives $q_i z(t_{i+1})\le0$, and \eqref{eq:z-at-maximum} therefore yields
\begin{align}
 k_1A_i^\alpha
\le q_i\bigl(z(t_i^A)-z(t_{i+1})\bigr)
=\int_{t_i^A}^{t_{i+1}}
\bigl(k_2|\sigma(t)|^\beta-q_id(t)\bigr)\,dt.
\label{eq:integral-change}
\end{align}
Using $|\sigma|\le A_i$ and $|d|\le\delta$ gives $t_{i+1}-t_i^A\ge\tau(A_i)$.

For the second return bound, set $s=t-t_i^A$ and $M_i=M(A_i)$. Since $q_i\dot z\ge-M_i$, $q_i z(t)\ge k_1A_i^\alpha-M_i s$, and, using $|\sigma|\le A_i$, $q_iw(t)\ge-M_i s$. If $q_iw(t)<0$, then $a(t)\le a_M$ gives $q_i\dot\sigma=a(t)q_iw\ge a_Mq_iw\ge-a_MM_i s$. If $q_iw(t)\ge0$, the same lower bound is immediate. Thus $q_i\dot\sigma(t)\ge-a_MM_i s$. Integration up to $t_{i+1}$ gives $A_i\le\frac{a_MM_i}{2}(t_{i+1}-t_i^A)^2$, which proves \eqref{eq:a-dependent-return-bound}.

For the rise time, the same bound on $q_i\dot z$ gives $q_i z(t)\le k_1A_i^\alpha+M_i(t_i^A-t)$ on $[t_i,t_i^A]$. If $q_iw(t)\ge0$, then $a(t)\le a_M$ gives $q_i\dot\sigma(t)\le a_M[k_1A_i^\alpha+M_i(t_i^A-t)]$. If $q_iw(t)<0$, this upper bound is immediate. With $T_i^r=t_i^A-t_i$, integration gives $A_i\le a_M\left(k_1A_i^\alpha T_i^r+\frac{M_i}{2}(T_i^r)^2\right)$. Solving the quadratic inequality gives \eqref{eq:rise-time-bound}--\eqref{eq:tau-r}. Adding the rise and return bounds gives \eqref{eq:complete-interval-bound}.
\end{proof}

\begin{remark}[Homogeneous scaling and the role of $a_M$]
\label{rem:homogeneous-scaling}
The homogeneous state weights of system \eqref{eq:closed-loop} are $(r_\sigma,r_z)=(1-\ell,1)$ with degree $\ell$. Since $\beta=2\alpha-1$, when $\delta=0$ the three bounds $\tau$, $\tau_a$, and $\tau_r$ all scale as $A^{-\ell/(1-\ell)}$, which is the amplitude--time scaling imposed by homogeneity. For the classical STA, $\ell=-1$ and $\beta=0$, and the square-root scaling holds for every $\delta\ge0$.

The two return bounds capture different mechanisms. The bound $\tau(A)$ is obtained from the finite change required in $z$ and is uniform over all admissible positive input coefficients. The companion bound $\tau_a(A)$ uses $a_M$ to limit how rapidly $\sigma$ can return from its maximum. In particular, if $\ell>-1$ and $\delta>0$, $\tau(A)\sim\frac{k_1}{\delta}A^\alpha$, $\tau_a(A)\sim\sqrt{\frac{2A}{a_M\delta}}$, $A\to0$, so the $a_M$-dependent bound is asymptotically stronger. Thus, the distinctive property of $\tau$ is its uniformity with respect to $a(t)$, while $\tau_a$ can provide a sharper bound when $a_M$ is used. For comparison, a generic local Lipschitz estimate $|\dot\sigma|\le L$ gives only $t_{i+1}-t_i^A\ge A_i/L$ on the return segment, with a trajectory-dependent constant $L$. The bounds above are explicit in the available system quantities.
\end{remark}

\begin{corollary}[Prescribed excursion threshold]
\label{cor:threshold}
Let $\varepsilon>0$ and consider a zero-crossing interval satisfying $A_i\ge\varepsilon$. Define $t_i^\varepsilon
:=
\inf\bigl\{t\in(t_i,t_{i+1}):|\sigma(t)|=\varepsilon\bigr\}$. Then $t_{i+1}-t_i^\varepsilon
\ge
\overline\tau_\varepsilon
:=
\max\{\tau_\varepsilon,\tau_{a,\varepsilon}\}$, where
$\tau_\varepsilon
:=
\frac{k_1\varepsilon^\alpha}{M(\varepsilon)}$, $\tau_{a,\varepsilon}
:=
\sqrt{\frac{2\varepsilon}
{a_M M(\varepsilon)}}$.
\end{corollary}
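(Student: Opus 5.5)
The plan is to rerun the two return-time arguments of Theorem~\ref{thm:main}, but starting from $t_i^\varepsilon$ rather than $t_i^A$. The key observation is that on $[t_i^\varepsilon,t_{i+1}]$ one can no longer use $|\sigma|\le\varepsilon$ directly. Instead, I would work on the final segment, where $|\sigma|$ returns from $\varepsilon$ to zero.

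\textbf{Preliminaries.} Since $A_i\ge\varepsilon$ and $\sigma$ is continuous with $\sigma(t_i)=0$, the set in the definition of $t_i^\varepsilon$ is nonempty and closed relative to $(t_i,t_{i+1})$. Its infimum satisfies $|\sigma(t_i^\varepsilon)|=\varepsilon$ and $t_i^\varepsilon\le t_i^A$. Set
\[
s_i:=\sup\{t\in[t_i^\varepsilon,t_{i+1}):\ q_i\sigma(t)\ge\varepsilon\}.
\]
Then $q_i\sigma(s_i)=\varepsilon$, $0<q_i\sigma<\varepsilon$ on $(s_i,t_{i+1})$, and $s_i\ge t_i^\varepsilon$. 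It therefore suffices to prove $t_{i+1}-s_i\ge\overline\tau_\varepsilon$.

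\textbf{Bound at the departure time.} The analogue of Lemma~\ref{lem:maximum-relation} at $s_i$ is an inequality rather than an equality. Suppose $q_iw(s_i)>0$. By continuity of $w$ and $a\ge a_m>0$, $q_i\sigma$ would then exceed $\varepsilon$ just after $s_i$, contradicting the definition of $s_i$. Hence $q_iw(s_i)\le0$. I would not use this sign directly. What is needed is a lower bound on $q_iz(s_i)$. From $w=-k_1\spow{\sigma}{\alpha}+z$ we get $q_iz(s_i)=q_iw(s_i)+k_1\varepsilon^\alpha$, which is only bounded above. So this pointwise route is insufficient, and I would replace it with an integral argument.

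\textbf{Bound $\tau_\varepsilon$.} Take $r\in[t_i^A,t_{i+1}]$ to be a time after $t_i^A$ with $q_i\sigma(r)=\varepsilon$, chosen as the last such time (this is $s_i$ if $s_i\ge t_i^A$; indeed $s_i\ge t_i^A$ always holds since $q_i\sigma(t_i^A)=A_i\ge\varepsilon$). On $[t_i^A,s_i]$ the dynamics lower $q_iz$ from $k_1A_i^\alpha$. On $[s_i,t_{i+1}]$ we have $|\sigma|\le\varepsilon$, so $q_i\dot z\ge-M(\varepsilon)$ there.

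At $s_i$, either $q_iz(s_i)\ge k_1\varepsilon^\alpha$, or it is smaller. In the first case, Proposition~\ref{prop:crossing-sign} gives $q_iz(t_{i+1})\le0$, and integrating as in \eqref{eq:integral-change} over $[s_i,t_{i+1}]$ with $|\sigma|\le\varepsilon$ gives $t_{i+1}-s_i\ge\tau_\varepsilon$. In the second case, $q_iw(s_i)<0$ means the descent has already begun. This case is the main obstacle: $q_iz(s_i)$ could in principle be much smaller than $k_1\varepsilon^\alpha$. I would first try to exclude it using the monotonicity structure of the descent. Failing that, I would restart the argument from the last time before $s_i$ at which $q_iw=0$ and $q_i\sigma\ge\varepsilon$. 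At such a time $q_iz=k_1|\sigma|^\alpha\ge k_1\varepsilon^\alpha$, which reduces to the first case. The price is that the bound on $\dot z$ must now cover the stretch where $|\sigma|$ exceeds $\varepsilon$.

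\textbf{Bound $\tau_{a,\varepsilon}$.} Here the restriction to $[s_i,t_{i+1}]$ works cleanly. Starting from a time where $q_iw\ge0$ and $q_i\sigma\ge\varepsilon$, the argument is identical to the $\tau_a$ part of Theorem~\ref{thm:main}: one gets $q_iw\ge-M s$, hence $q_i\dot\sigma\ge-a_M M s$. Integrating over a drop of at least $\varepsilon$ gives the stated square root. The remaining difficulty is ensuring $M$ can be taken as $M(\varepsilon)$ rather than $M(A_i)$.

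An alternative is simply to use the monotonicity of $\tau$ and $\tau_a$ in $A$ together with Theorem~\ref{thm:main}. This works whenever $\tau$ and $\tau_a$ are nondecreasing in $A$, which holds for $\ell=-1$ and for $\delta=0$ with $\ell\le0$. Combined with $t_{i+1}-t_i^\varepsilon\ge t_{i+1}-t_i^A$, this yields the claim in those regimes, and I would present it as the primary route, with the last-crossing argument covering the general case.
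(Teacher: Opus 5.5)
Your ``alternative'' route is exactly the paper's proof. Your view of its scope is wrong, though: $\tau$ and $\tau_a$ are nondecreasing for every $\ell\in[-1,0]$ and every $\delta\ge0$, not only for $\ell=-1$ or $\delta=0$. Indeed,
\[
\tau(A)=\frac{k_1}{k_2A^{\beta-\alpha}+\delta A^{-\alpha}},\qquad
\frac{A}{M(A)}=\frac{1}{k_2A^{\beta-1}+\delta A^{-1}},
\]
with $\beta-\alpha=\ell/(1-\ell)\le0$, $\beta-1=2\ell/(1-\ell)\le0$ and $\alpha>0$. Both denominators are therefore nonincreasing in $A>0$. Combined with $t_i^\varepsilon\le t_i^A$ and Theorem~\ref{thm:main}, this gives $t_{i+1}-t_i^\varepsilon\ge\max\{\tau(A_i),\tau_a(A_i)\}\ge\max\{\tau(\varepsilon),\tau_a(\varepsilon)\}=\overline\tau_\varepsilon$ in all cases. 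As written, however, your proposal hands the case $\ell\in(-1,0]$, $\delta>0$ to the ``last-crossing argument'', and that argument does not go through.

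It fails at its first step. The reduction to $t_{i+1}-s_i\ge\overline\tau_\varepsilon$ replaces the claim by a stronger one that is false in general. Suppose the excursion ends with $q_iz(t_{i+1})=-c<0$, which is the generic transient situation. On $(s_i,t_{i+1})$ one has $|\sigma|<\varepsilon$, hence $|\dot z|\le M(\varepsilon)$ almost everywhere. So $q_iw\le q_iz\le-c/2$ and $q_i\dot\sigma\le-a_mc/2$ on the final portion of $[s_i,t_{i+1}]$ of length $\min\{t_{i+1}-s_i,\,c/(2M(\varepsilon))\}$. Since $q_i\sigma\le\varepsilon$ there, this length is at most $2\varepsilon/(a_mc)$, which for small $\varepsilon$ is below $c/(2M(\varepsilon))$. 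Hence $t_{i+1}-s_i\le 2\varepsilon/(a_mc)$, whereas $\tau_{a,\varepsilon}\ge\sqrt{2\varepsilon/(a_M(k_2+\delta))}$ for $\varepsilon\le1$.

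The time certified by the corollary accumulates during the descent from $A_i$ to $\varepsilon$, which your reduction discards. This is also why your ``second case'' $q_iw(s_i)<0$ cannot be excluded: it is the typical one. Your repair restarts at a turning point $r\le s_i$ with $B:=|\sigma(r)|\ge\varepsilon$. It yields only $t_{i+1}-r\ge\max\{\tau(B),\tau_a(B)\}$, with $M(B)$ in place of $M(\varepsilon)$, and passing from $B$ to $\varepsilon$ is precisely the monotonicity you believed unavailable. So the missing ingredient is the monotonicity check above; once you have it, the $s_i$ construction is unnecessary.
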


\begin{proof}
Choose $t_i^A$ as the first time at which $|\sigma|=A_i$. Continuity gives $t_i^\varepsilon\le t_i^A$. Since $\alpha\ge\beta$, $\tau$ is nondecreasing. Since $\beta\in[0,1]$, so is $A/M(A)$, and therefore $\tau_a$. Hence $t_{i+1}-t_i^\varepsilon\ge t_{i+1}-t_i^A\ge\max\{\tau(A_i),\tau_a(A_i)\}\ge\overline\tau_\varepsilon$.
\end{proof}

Corollary~\ref{cor:threshold} gives an online lower bound once $|\sigma|$ first reaches the selected level $\varepsilon$ after a zero crossing. The quantity $\tau_\varepsilon$ is the component that remains uniform with respect to $a(t)$.

\begin{corollary}[Finite number of non-negligible excursions]
\label{cor:counting}
Let $[T_a,T_b]$ be a finite interval and let $N_\varepsilon$ denote the number of complete zero-crossing intervals contained in $[T_a,T_b]$ whose excursion amplitudes satisfy $A_i\ge\varepsilon>0$. Define $\Theta_\varepsilon
:=
\tau_r(\varepsilon)+\overline\tau_\varepsilon$. Then, $N_\varepsilon
\le
\left\lfloor
\frac{T_b-T_a}{\Theta_\varepsilon}
\right\rfloor
\le
\left\lfloor
\frac{T_b-T_a}{\tau_\varepsilon}
\right\rfloor$. Hence, only finitely many zero-crossing intervals whose excursion amplitude exceeds a prescribed nonzero threshold can occur over any finite time interval.
\end{corollary}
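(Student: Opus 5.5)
The plan is to combine the complete-interval bound \eqref{eq:complete-interval-bound} of Theorem~\ref{thm:main} with monotonicity of the three bound functions in $A$. Then we use that distinct complete zero-crossing intervals have disjoint interiors.

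First, I will show that every complete zero-crossing interval with $A_i\ge\varepsilon$ has length at least $\Theta_\varepsilon$. By Theorem~\ref{thm:main}, $t_{i+1}-t_i\ge\tau_r(A_i)+\max\{\tau(A_i),\tau_a(A_i)\}$. The proof of Corollary~\ref{cor:threshold} already gives that $\tau$ and $\tau_a$ are nondecreasing, using $\alpha\ge\beta$ and $\beta\in[0,1]$. It therefore remains to show that $\tau_r$ is nondecreasing on $(0,\infty)$. This is the main obstacle.

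To handle it, I will not differentiate the explicit formula \eqref{eq:tau-r}. Instead, I will characterize $\tau_r(A)$ as the unique positive root $T$ of
\begin{equation*}
\frac{M(A)}{2}T^2+k_1A^\alpha T=\frac{A}{a_M}.
\end{equation*}
Dividing by $A$ gives
\begin{equation*}
\frac{M(A)}{2A}T^2+k_1A^{\alpha-1}T=\frac{1}{a_M}.
\end{equation*}
The left side is increasing in $T$. Its coefficients are nonincreasing in $A$:
\begin{itemize}
\item $A^{\alpha-1}$ is nonincreasing because $\alpha\le1$;
\item $M(A)/A=k_2A^{\beta-1}+\delta/A$ is nonincreasing because $\beta\le1$.
\end{itemize}
So for $A'\ge A$, the left side evaluated at $(A',\tau_r(A))$ is at most $1/a_M$. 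Hence the root for $A'$ satisfies $\tau_r(A')\ge\tau_r(A)$. If the case $\delta=0$, $\ell=0$ makes some coefficient constant, the monotonicity is still non-strict, which suffices. Combining the three monotonicity facts gives $t_{i+1}-t_i\ge\tau_r(\varepsilon)+\overline\tau_\varepsilon=\Theta_\varepsilon$.

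Next comes the counting step. The zero-crossing intervals are defined by consecutive sign-changing crossings, so two distinct ones overlap at most at an endpoint. The $N_\varepsilon$ intervals contained in $[T_a,T_b]$ therefore have total length at most $T_b-T_a$. Hence $N_\varepsilon\Theta_\varepsilon\le T_b-T_a$. Since $N_\varepsilon$ is an integer, $N_\varepsilon\le\lfloor (T_b-T_a)/\Theta_\varepsilon\rfloor$. This also shows that $N_\varepsilon$ is finite a priori: any finite subfamily obeys the bound, so the family cannot be infinite.

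The second inequality follows from $\Theta_\varepsilon\ge\overline\tau_\varepsilon\ge\tau_\varepsilon>0$, because $\tau_r\ge0$ and the floor function is monotone. Finally, positivity of $\tau_\varepsilon$ yields the qualitative conclusion that only finitely many such excursions occur on any finite interval.
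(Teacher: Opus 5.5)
Your proof is correct and follows the paper's structure: monotonicity of $\tau$, $\tau_a$ and $\tau_r$ gives each qualifying interval length at least $\Theta_\varepsilon$, and disjoint interiors plus the finite-subcollection argument give the count and the finiteness. The only variation is in the monotonicity of $\tau_r$, where you divide the defining quadratic by $A$ and use $\alpha\le1$, $\beta\le1$, whereas the paper substitutes $\tau_r(A)=A^{1-\alpha}y(A)$ via $\beta=2\alpha-1$ and uses $\beta\ge0$; both are the same root-comparison idea, with yours slightly more direct and the paper's exposing the homogeneous factor $A^{1-\alpha}$.
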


\begin{proof}
The functions $\tau$ and $\tau_a$ are nondecreasing. For $\tau_r$, use $\beta=2\alpha-1$ and write $\tau_r(A)=A^{1-\alpha}y(A)$, where the positive root $y(A)$ satisfies $\frac{1}{a_M}=k_1y+\frac{1}{2}(k_2+\delta A^{-\beta})y^2$.  The coefficient $k_2+\delta A^{-\beta}$ is nonincreasing, so $y(A)$ is nondecreasing. Since $1-\alpha\ge0$, $A^{1-\alpha}$ is also nondecreasing, and hence so is $\tau_r$. Thus every interval with $A_i\ge\varepsilon$ has length at least $\Theta_\varepsilon$.

For any finite subcollection of $m$ such intervals, disjointness gives $m\Theta_\varepsilon\le T_b-T_a$, hence $m\le\left\lfloor(T_b-T_a)/\Theta_\varepsilon\right\rfloor$. An infinite family would contain a finite subcollection larger than this bound, which is impossible. Therefore the same bound holds for $N_\varepsilon$. The second inequality follows from $\Theta_\varepsilon\ge\tau_\varepsilon$.
\end{proof}

Equation~\eqref{eq:integral-change} also gives a cumulative estimate.

\begin{corollary}[Cumulative excursion estimate]
\label{cor:cumulative}
Consider complete zero-crossing intervals $[t_i,t_{i+1}]$ contained in $[T_a,T_b]$. When $\beta=0$, the integrand $|\sigma|^\beta$ below is understood as one. Its value at isolated zeros does not affect the integral. Their excursion amplitudes satisfy
\begin{equation}
\label{eq:cumulative-general}
k_1\sum_i A_i^\alpha
\le
k_2\int_{T_a}^{T_b}|\sigma(t)|^\beta\,dt
+\delta(T_b-T_a).
\end{equation}
If $|\sigma(t)|\le\rho$ on $[T_a,T_b]$ for some $\rho>0$, then
\begin{equation}
\label{eq:cumulative-rho}
\sum_i A_i^\alpha
\le
\frac{k_2\rho^\beta+\delta}{k_1}(T_b-T_a).
\end{equation}
\end{corollary}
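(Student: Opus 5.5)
The plan is to sum the integral identity \eqref{eq:integral-change} from the proof of Theorem~\ref{thm:main} over all complete zero-crossing intervals in $[T_a,T_b]$, using that these intervals have pairwise disjoint interiors.

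\textbf{Step 1: the single-interval estimate.} Fix one interval $[t_i,t_{i+1}]\subset[T_a,T_b]$ and choose any $t_i^A$ satisfying \eqref{eq:maximum-time}. Lemma~\ref{lem:maximum-relation} gives $q_iz(t_i^A)=k_1A_i^\alpha$, and Proposition~\ref{prop:crossing-sign} gives $q_iz(t_{i+1})\le0$. Together with \eqref{eq:integral-change}, this yields
\begin{equation*}
k_1A_i^\alpha\le\int_{t_i^A}^{t_{i+1}}\bigl(k_2|\sigma(t)|^\beta-q_id(t)\bigr)\,dt
\le\int_{t_i}^{t_{i+1}}\bigl(k_2|\sigma(t)|^\beta+\delta\bigr)\,dt .
\end{equation*}
The integrand on the right is nonnegative, so extending the lower limit from $t_i^A$ to $t_i$ only enlarges the integral.

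For $\beta=0$ the Filippov inclusion requires some care. On the open interval $(t_i,t_{i+1})$ we have $\sigma\neq0$, so $\operatorname{Sgn}(\sigma)=q_i$ there and $q_i\dot z=-k_2+q_id$ almost everywhere. This is consistent with reading $|\sigma|^0$ as one. The endpoints have measure zero, which is the point of the remark about isolated zeros.

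\textbf{Step 2: summation over the family.} Distinct complete zero-crossing intervals are consecutive crossing pairs, so their interiors are pairwise disjoint. For any finite subcollection, summing the Step~1 bounds gives
\begin{equation*}
k_1\sum_i A_i^\alpha\le\int_{\bigcup_i[t_i,t_{i+1}]}\bigl(k_2|\sigma|^\beta+\delta\bigr)\,dt
\le k_2\int_{T_a}^{T_b}|\sigma|^\beta\,dt+\delta(T_b-T_a).
\end{equation*}
The second inequality again uses nonnegativity of the integrand. The right-hand side does not depend on the subcollection. Since the family of intervals is at most countable (each interval contains a rational), the full sum of nonnegative terms is the supremum of the finite partial sums. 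This proves \eqref{eq:cumulative-general}.

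\textbf{Step 3: the bounded case.} If $|\sigma|\le\rho$ on $[T_a,T_b]$, then $|\sigma|^\beta\le\rho^\beta$, since $\beta\ge0$ (with value one when $\beta=0$). Substituting into \eqref{eq:cumulative-general} and dividing by $k_1$ gives \eqref{eq:cumulative-rho}.

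\textbf{Main obstacle.} The only delicate point is handling possibly infinitely many intervals, for example near an accumulation point. I will handle it through finite subcollections and the supremum argument of Step~2, which is the same device used in Corollary~\ref{cor:counting}. The rest is bookkeeping.
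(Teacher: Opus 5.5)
Your proposal is correct and is essentially the paper's proof: you sum \eqref{eq:integral-change} over finite subcollections, use disjointness to bound the total by the integral over $[T_a,T_b]$, and take the supremum. The only cosmetic difference is that you first enlarge each integral to the full interval $[t_i,t_{i+1}]$ and use disjoint interiors, whereas the paper works directly with the pairwise disjoint return segments $[t_i^A,t_{i+1}]$.
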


\begin{proof}
Sum \eqref{eq:integral-change} over any finite subcollection $\mathcal F$. The return segments $[t_i^A,t_{i+1}]$ are disjoint, so
$k_1\sum_{i\in\mathcal F}A_i^\alpha\le k_2\int_{T_a}^{T_b}|\sigma(t)|^\beta\,dt+\delta(T_b-T_a)$. Taking the supremum over all finite subcollections gives \eqref{eq:cumulative-general}. If $|\sigma|\le\rho$, then $\int_{T_a}^{T_b}|\sigma|^\beta dt\le\rho^\beta(T_b-T_a)$, which gives \eqref{eq:cumulative-rho}.
\end{proof}

\begin{corollary}[Behavior of an accumulating zero-crossing sequence]
\label{cor:accumulation}
Suppose that infinitely many complete zero-crossing intervals are contained in a finite interval. Then
\begin{equation}
\label{eq:alpha-summability}
\sum_i A_i^\alpha<\infty,
\end{equation}
and therefore $A_i\to0$. 
\end{corollary}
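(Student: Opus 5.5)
The plan is to apply Corollary~\ref{cor:cumulative} on the finite interval $[T_a,T_b]$ that contains the infinitely many complete zero-crossing intervals. The only thing to check is that the right-hand side of \eqref{eq:cumulative-general} is finite.

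\emph{Finiteness of the right-hand side.} The trajectory is a forward solution, so $\sigma$ is continuous on the compact set $[T_a,T_b]$ and is bounded there by some $\rho:=\max_{[T_a,T_b]}|\sigma|$. If $\rho=0$, then $\sigma\equiv0$ on $[T_a,T_b]$. No sign-changing crossings could then occur, which contradicts the hypothesis, so $\rho>0$. The bound \eqref{eq:cumulative-rho} then applies. When $\beta=0$, the integrand is interpreted as one, as in Corollary~\ref{cor:cumulative}. This gives
\[
\sum_{i\in\mathcal F}A_i^\alpha\le\frac{k_2\rho^\beta+\delta}{k_1}(T_b-T_a)
\]
for every finite subcollection $\mathcal F$.

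\emph{Passing to the full family.} The consecutive crossings contained in $[T_a,T_b]$ form a countable family. This holds because the zero-crossing intervals have disjoint interiors of positive length, so each contains a distinct rational number. The full series has nonnegative terms, and every finite partial sum is bounded by the same constant. Hence the series converges, which gives \eqref{eq:alpha-summability}. Since $\alpha>0$ (indeed $\alpha\in[1/2,1]$), the terms of a convergent series satisfy $A_i^\alpha\to0$, and therefore $A_i\to0$. If "$A_i\to0$" is read along an arbitrary enumeration, the argument is the same: for each $\varepsilon>0$, only finitely many terms satisfy $A_i\ge\varepsilon$. Corollary~\ref{cor:counting} gives this independently as well.

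\emph{Main obstacle.} The step I expect to be the main obstacle is only a bookkeeping point: the summation in Corollary~\ref{cor:cumulative} needs the return segments $[t_i^A,t_{i+1}]$ to be disjoint. This follows from the zero-crossing intervals overlapping at most at their endpoints. Beyond that, the result is an immediate consequence of the earlier corollary together with the boundedness of $\sigma$ on compact intervals.
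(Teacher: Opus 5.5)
Your proposal is correct and takes the same route as the paper: the paper invokes \eqref{eq:cumulative-general} (Corollary~\ref{cor:cumulative}) on the finite interval and then uses nonnegativity of the terms to get $A_i^\alpha\to0$, hence $A_i\to0$. Your extra checks make explicit what the paper leaves implicit: finiteness of the right-hand side by continuity of $\sigma$, $\rho>0$, countability of the family, and that $A_i\to0$ holds for any enumeration.
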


\begin{proof}
Equation~\eqref{eq:cumulative-general} gives \eqref{eq:alpha-summability} on every finite interval. Since the terms are nonnegative, $A_i^\alpha\to0$, and therefore $A_i\to0$.
\end{proof}

\begin{corollary}[Effective input along an accumulating zero-crossing sequence]
\label{cor:effective_input_accumulation}
Let $\{t_i\}_{i\in\mathbb{N}}$ be a sequence of consecutive sign-changing zero crossings of $\sigma$ such that $t_i\to t_\star<\infty$. Then $(\sigma(t_\star),z(t_\star))=(0,0)$, and $z(t_i)=u(t_i)+\phi(t_i)\to0$. Moreover, $u(t_i)\to-\phi(t_\star)$.
\end{corollary}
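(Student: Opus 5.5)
The argument has two parts. First, the accumulating crossings force $\sigma(t_\star)=0$ and the excursion amplitudes to vanish. Second, the value of $z$ at each crossing is pinned between $0$ and a quantity that tends to zero.

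\emph{Step 1: $\sigma(t_\star)=0$ and $A_i\to0$.} Since the $t_i$ are consecutive sign-changing zero crossings, $\sigma$ has constant sign strictly between $t_i$ and $t_{i+1}$. Otherwise an intermediate sign change would produce a crossing between them; if $\sigma$ vanishes without changing sign, the open-interval definition still yields a sign, possibly after handling tangential contacts. I will take each $[t_i,t_{i+1}]$ to be a complete zero-crossing interval in the sense of Definition~\ref{def:crossing-interval}. All of them lie in the finite interval $[t_0,t_\star]$. Corollary~\ref{cor:accumulation} then gives $\sum_i A_i^\alpha<\infty$, hence $A_i\to0$. Continuity gives $\sigma(t_\star)=\lim\sigma(t_i)=0$.

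\emph{Step 2: $z(t_i)\to0$.} Consider two consecutive crossings $t_i$ and $t_{i+1}$, and suppose $t_i$ is upward (the downward case is symmetric), so $q_i=1$. Proposition~\ref{prop:crossing-sign} gives $z(t_i)\ge0$. For the upper bound, pick $t_i^A$ with $|\sigma(t_i^A)|=A_i$. Lemma~\ref{lem:maximum-relation} gives $z(t_i^A)=k_1A_i^\alpha$. On $[t_i,t_i^A]$ we have $|\dot z|\le M(A_i)$ almost everywhere, as in the proof of Theorem~\ref{thm:main}. Hence
\[
0\le z(t_i)\le k_1A_i^\alpha+M(A_i)(t_{i+1}-t_i).
\]
Here $M(A_i)\le k_2\sup|\sigma|^\beta+\delta$ is bounded on the compact interval, and $t_{i+1}-t_i\to0$. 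Together with $A_i\to0$, this gives $|z(t_i)|\to0$.

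An even simpler alternative is available: $z$ is locally Lipschitz, so $|z(t_i)-z(t_i^A)|\le L|t_i-t_i^A|$, where $L$ bounds $|\dot z|$ on the compact interval. The claim then follows directly.

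\emph{Step 3: the remaining limits.} Continuity of $z$ gives $z(t_\star)=\lim z(t_i)=0$. By \eqref{eq:w-z-at-zero}, $z(t_i)=u(t_i)+\phi(t_i)$, so $u(t_i)=z(t_i)-\phi(t_i)\to-\phi(t_\star)$ by continuity of $\phi$. The only delicate point is Step 1's identification of consecutive crossings with complete zero-crossing intervals, so that Lemma~\ref{lem:maximum-relation} and Corollary~\ref{cor:accumulation} apply. Even without it, $\sigma(t_\star)=0$ follows from continuity alone. Also, $A_i\to0$ follows from the uniform continuity of $\sigma$ on the compact interval together with $t_{i+1}-t_i\to0$, so the argument is robust to that subtlety.
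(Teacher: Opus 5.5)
Your proof has a gap, and it sits in Step~2, not Step~1. The claim that an intermediate sign change between consecutive crossings would itself be a crossing is false under Definition~\ref{def:crossing}. That definition explicitly excludes tangential contacts and segments on which $\sigma\equiv0$. So between two consecutive crossings, $\sigma$ may touch zero, or pass from one sign to the other through a segment with $\sigma\equiv0$, without creating a crossing. Nothing in the standing assumptions (arbitrary $d$ with $|d|\le\delta$, no gain condition) rules this out. This is why Definition~\ref{def:crossing-interval} imposes $q_i\sigma>0$ on $(t_i,t_{i+1})$ as an extra hypothesis.

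Your closing remark recovers $\sigma(t_\star)=0$ and $A_i\to0$ by uniform continuity. However, Step~2, in both versions, still invokes Lemma~\ref{lem:maximum-relation}, which is stated only for zero-crossing intervals. So the claimed robustness is not yet established.

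The repair is short. The proof of Lemma~\ref{lem:maximum-relation} goes through at any interior maximizer $t_i^A$ of $|\sigma|$ on $[t_i,t_{i+1}]$ once $q_i$ is replaced by $p_i:=\sgn(\sigma(t_i^A))$. This gives $w(t_i^A)=0$ and $|z(t_i^A)|=k_1A_i^\alpha$ without any constant-sign assumption. That is all your estimate $|z(t_i)|\le k_1A_i^\alpha+M(A_i)(t_i^A-t_i)$ needs, taking $A_i:=\max_{[t_i,t_{i+1}]}|\sigma|$ and $A_i\to0$ by uniform continuity. Corollary~\ref{cor:accumulation} is then no longer required.

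With this fix your argument is valid, but it takes a different route from the paper. The paper never uses the maximum. It sets $s_i:=\inf\{t\in(t_i,t_{i+1}]:\sigma(t)=0\}$, so that $q_i\sigma>0$ on $(t_i,s_i)$ by construction. The continuity argument of Proposition~\ref{prop:crossing-sign}, applied on the left of $s_i$, then gives $q_iz(s_i)\le0$. Combined with $q_iz(t_i)\ge0$, this yields $|z(t_i)|\le|z(t_i)-z(s_i)|\to0$ from continuity of $z$ alone. That device sidesteps the constant-sign issue entirely and needs neither a bound on $\dot z$ nor summability.

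Your version costs Lemma~\ref{lem:maximum-relation} and the local bound $|\dot z|\le M(A_i)$. In return it gives an explicit quantitative estimate, $|z(t_i)|\le k_1A_i^\alpha+M(A_i)(t_{i+1}-t_i)$, in terms of the excursion amplitude and the interval length.
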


\begin{proof}
Continuity gives $\sigma(t_\star)=0$. Let $q_i$ be the sign of $\sigma$ just after $t_i$, and let $s_i:=\inf\{t\in(t_i,t_{i+1}]:\sigma(t)=0\}$. Then $q_i z(t_i)\ge0$ by Proposition~\ref{prop:crossing-sign}, while $q_i z(s_i)\le0$. Otherwise, $q_iw(s_i)>0$ and $q_i\sigma$ would increase just before its first return to zero.

Since $t_i\le s_i\le t_{i+1}$, both $t_i$ and $s_i$ converge to $t_\star$. Therefore $|z(t_i)|\le|z(t_i)-z(s_i)|\to0$, and continuity gives $z(t_\star)=0$. Finally, at a zero crossing $u(t_i)=\nu(t_i)$, so $z(t_i)=u(t_i)+\phi(t_i)\to0$. Since $\phi$ is continuous, $u(t_i)\to-\phi(t_\star)$.
\end{proof}

\begin{remark}[Finite-time accumulation and sliding-mode interpretation]
\label{rem:meaning_main_result}
The obtained bounds do not exclude finite-time accumulation of zero crossings. This can occur for the unperturbed classical STA as the trajectory reaches the origin \cite{b:SeeberHornFridman2018ReachingTime}. If such an accumulation occurs, Corollary~\ref{cor:accumulation} gives $A_i\to0$ and, more strongly, $\sum_i A_i^\alpha<\infty$ on every compact interval. Moreover, Corollary~\ref{cor:effective_input_accumulation} gives $z(t_i)\to0$ and $u(t_i)\to-\phi(t_\star)$ along the accumulating crossing sequence. These conclusions hold for the whole homogeneous family considered here.

The zero-crossing bounds do not impose a positive dwell time on the discontinuous term $\spow{\sigma}{0}$ driving the integral state of the classical STA. At $\sigma=0$, this term is set-valued in the Filippov sense, and its internal selection is not constrained by these bounds. What is excluded is the finite-time accumulation of zero crossings whose excursions remain bounded away from zero.

For the classical STA, $\ell=-1$ and $\beta=0$, the limiting behavior above is consistent with ideal second-order sliding. At $\sigma=0$, the Filippov selection can compensate $d(t)$ and maintain $(\sigma,z)=(0,0)$ with appropriate gain selection. For $\ell>-1$, however, $\beta>0$ and $\dot z=d(t)$ at $\sigma=0$. Therefore, $u(t_i)\to-\phi(t_\star)$ is only a limiting statement along the crossing sequence and does not, in general, imply that $(\sigma,z)=(0,0)$ remains invariant after the accumulation time.
\end{remark}

As an additional verification step, the algebraic theorem chain underlying the excursion-time, cumulative, and accumulation results was formalized in Lean~4.34.1 and machine-checked with Mathlib v4.34.1 (\url{https://github.com/ERC-INJECT/Zero-cross-super-twisting/tree/main/lean}). The trajectory-level Filippov arguments are treated analytically in the proofs above.

\section{Numerical illustration}
\label{sec:numerics}

The estimates are illustrated for the classical STA, corresponding to $\ell=-1$, $\alpha=1/2$, and $\beta=0$. The time-varying input coefficient and matched perturbation are selected as $a(t)=1+0.1\sin(0.7t)$ and $\phi(t)=0.2\sin(5t)$, so that $0.9\le a(t)\le a_M=1.1$ and $|\dot\phi(t)|\le\delta=1$. Following \cite{b:Mathey-Moreno-2024}, the gains are chosen as $k_1=1.5$ and $k_2=1.1$. The initial condition is $(\sigma(0),\nu(0))=(1,-1)$, and the simulation horizon is $t\in[0,7]$.

The physical coordinates $(\sigma,\nu)$ are integrated with an event-based BDF implementation of \texttt{solve\_ivp}, using relative tolerance $5\times10^{-7}$ and componentwise absolute tolerance $10^{-10}$. Zero crossings are detected through the non-terminal event $\sigma=0$, with \texttt{terminal=False} and both crossing directions enabled. Hence, the event logic records the crossings without interrupting or restarting the nominal integration. The solver reaches the prescribed final time after $3570$ accepted steps and $15063$ right-hand-side evaluations. All simulations were performed with Python 3.12.3, NumPy 1.26.4 and SciPy 1.14.1.

For each complete interval $[t_i,t_{i+1}]$, the sign $q_i$ is determined from the interior trajectory. The excursion maximum $t_i^A$ is reconstructed by locating a positive-to-negative zero of $q_iw(t)$ from the dense-output solution, rather than by directly maximizing the interpolated trajectory. Since $\dot\sigma=a(t)w(t)$ with $a(t)>0$, this procedure enforces the necessary condition $w(t_i^A)=0$. The amplitude resolution is set to $A_{\rm res}=10\,\mathrm{atol}_{\sigma}=10^{-9}$. This threshold is used only to identify excursions that are well separated from the selected amplitude resolution. No detected zero crossing is discarded.

Figure~\ref{fig:numerical-trajectories} shows the closed-loop trajectories. The simulation detects $273$ zero crossings, yielding $272$ complete inter-crossing intervals. A dynamically consistent maximum $t_i^A$ is reconstructed for $215$ intervals. For the remaining $57$, no positive-to-negative zero of $q_iw$ is resolved at the selected post-processing accuracy. After approximately $t=2$, the trajectory has reached the numerical sliding layer (for the exact Filippov dynamics, a trajectory on $(\sigma,z)=(0,0)$ remains on the sliding set), so subsequent detected sign changes are solver-level events due to finite tolerances rather than physical post-reaching oscillations. The variables $z$ and $w$ approach zero, while the continuous control approaches the matched-perturbation compensation value.
\begin{figure}[ht!]
  \centering
  \includegraphics[width=8.5cm,keepaspectratio]
  {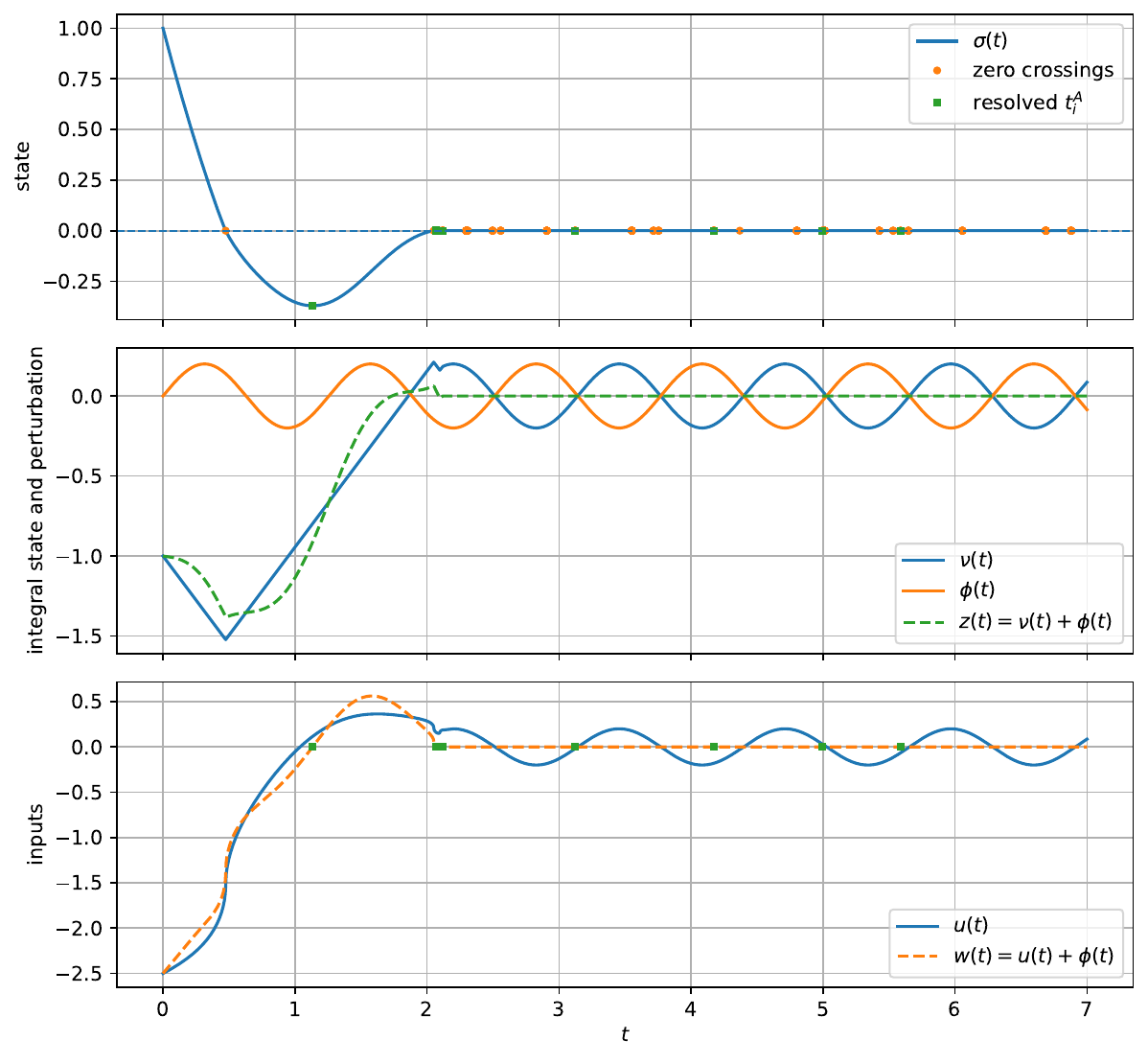}
  \caption{Closed-loop trajectories and detected zero crossings. Top:
  sliding variable $\sigma$, all passive zero-crossing events, and the
  reconstructed maxima $t_i^A$. Middle: controller integral state $\nu$,
  matched perturbation $\phi$, and disturbance-compensated coordinate
  $z=\nu+\phi$. Bottom: continuous control $u$ and effective input
  $w=u+\phi$. All detected crossings are retained.}
  \label{fig:numerical-trajectories}
\end{figure}

For each reconstructed interval, the return, rise, and complete-interval
ratios are defined as $\Gamma_i^{\rm ret}:=
\frac{t_{i+1}-t_i^A}
{\max\{\tau(A_i),\tau_a(A_i)\}}$, $\Gamma_i^{\rm rise}:=
\frac{t_i^A-t_i}{\tau_r(A_i)}$, and $\Gamma_i^{\rm int}:=
\frac{t_{i+1}-t_i}{\Theta(A_i)}$. The maximum normalized residual at the reconstructed turning points is
$1.052\times10^{-13}$, confirming $w(t_i^A)=0$ to numerical precision.
Nine reconstructed intervals satisfy $A_i\ge A_{\rm res}$. For this
resolved subset, the minimum ratios are
$\min_i\Gamma_i^{\rm ret}=1.3991$,
$\min_i\Gamma_i^{\rm rise}=1.9065$, and
$\min_i\Gamma_i^{\rm int}=1.5667$.

Figure~\ref{fig:numerical-excursion-bound} compares the measured return,
rise, and complete-interval times with the corresponding theoretical
lower bounds for these nine excursions. All three estimates are
satisfied in every case.
\begin{figure}[ht!]
  \centering
  \includegraphics[width=8.0cm,keepaspectratio]
  {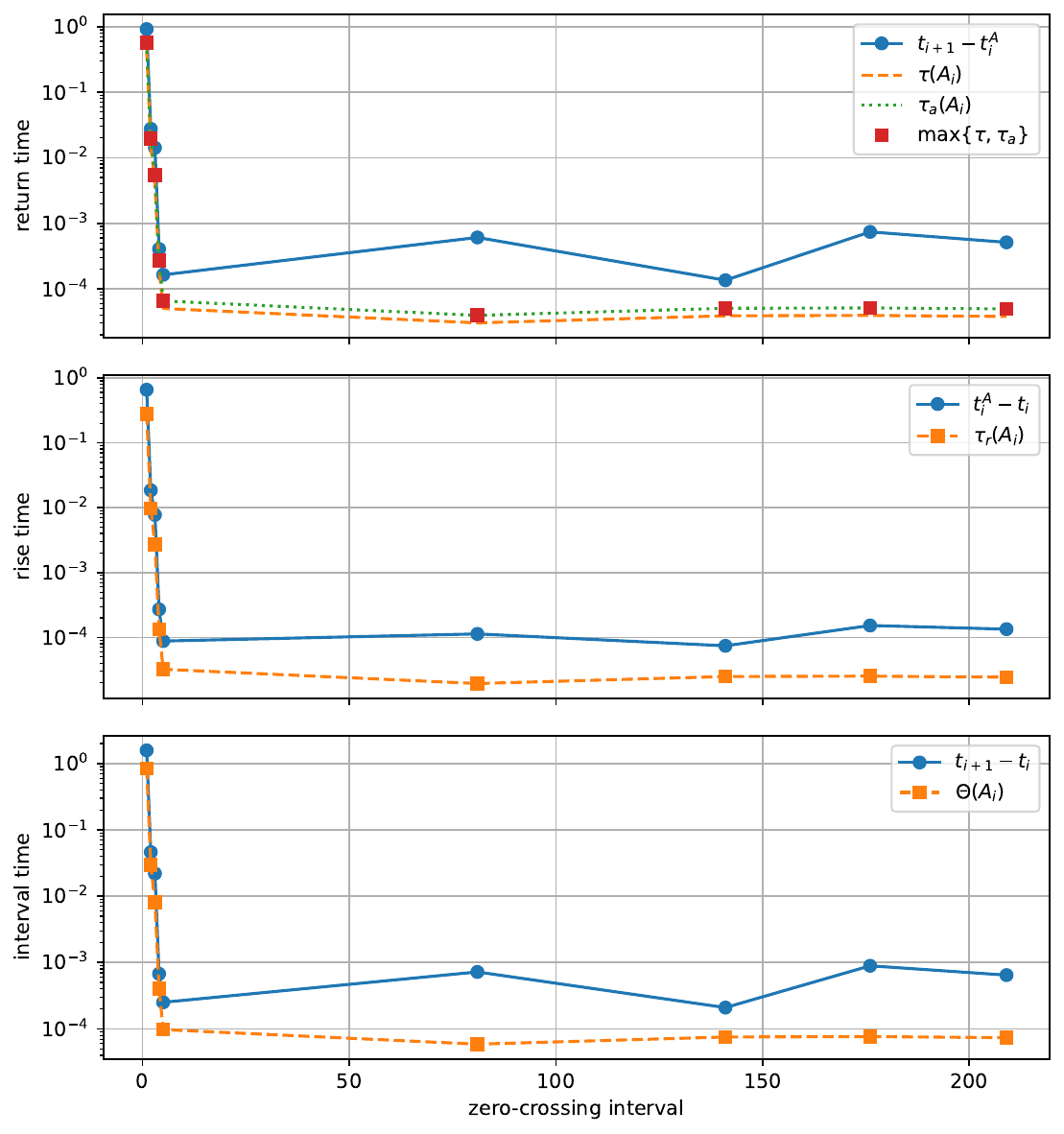}
  \caption{Excursion-dependent time estimates for the intervals satisfying
  $A_i\ge A_{\rm res}$. Top: measured return times and the bounds
  $\tau(A_i)$, $\tau_a(A_i)$, and $\max\{\tau(A_i),\tau_a(A_i)\}$.
  Middle: measured rise times and $\tau_r(A_i)$. Bottom: complete
  zero-crossing intervals and $\Theta(A_i)$.}
  \label{fig:numerical-excursion-bound}
\end{figure}

The online certificate of Corollary~\ref{cor:threshold} is evaluated for
$\varepsilon\in\{10^{-8},10^{-7},10^{-6},10^{-5},10^{-4}\}$. For each
excursion satisfying $A_i\ge\varepsilon$, the first outgoing time
$t_i^\varepsilon$ at which $|\sigma|=\varepsilon$ is located and the
ratio $\Gamma_i^\varepsilon:=
\frac{t_{i+1}-t_i^\varepsilon}{\overline\tau_\varepsilon}$ is computed. No violation is observed. Ordered by increasing
$\varepsilon$, the minimum ratios are $7.1305$, $74.5908$, $23.5176$,
$7.1550$, and $4.7715$. Thus, once the prescribed amplitude is reached,
the subsequent zero crossing occurs strictly after the certified
interval $\overline\tau_\varepsilon$.

The counting estimate of Corollary~\ref{cor:counting} is evaluated on the
complete crossing horizon $[T_a,T_b]=[0.47471755,6.88032770]$, whose
length is $6.40561015$. The observed excursion counts are
$N_\varepsilon=4,3,3,3,$ and $2$, whereas the corresponding upper bounds
based on $\Theta_\varepsilon$ are $46109$, $14580$, $4610$, $1458$, and
$461$. Figure~\ref{fig:numerical-online-certificate} confirms the
threshold-dependent time certificate and illustrates the expected
conservatism of the counting estimate.
\begin{figure}[ht!]
  \centering
  \includegraphics[width=8.5cm,keepaspectratio]
  {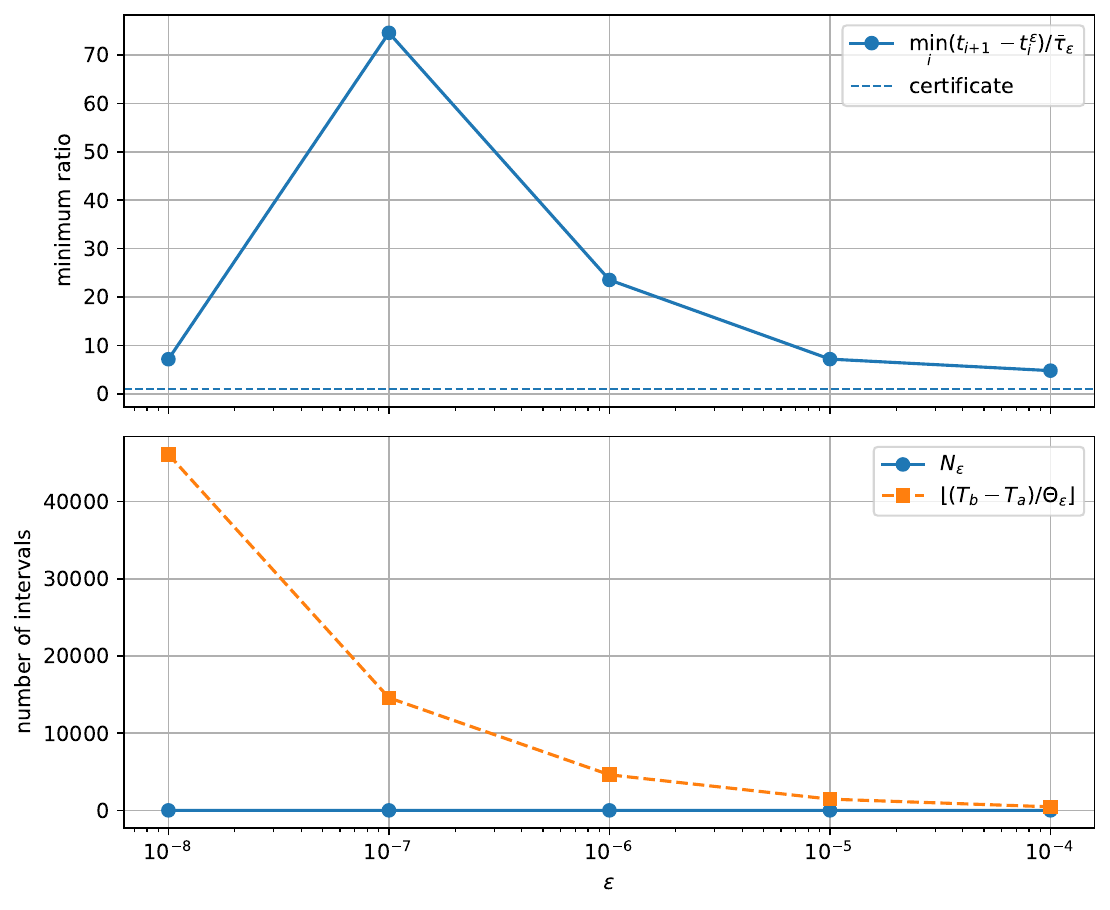}
  \caption{Prescribed-threshold certificate. Top: minimum ratio
  $\min_i\Gamma_i^\varepsilon$ for each threshold. The dashed line marks
  $\Gamma_i^\varepsilon=1$. Bottom: observed number $N_\varepsilon$ of
  complete excursions reaching the threshold and the upper bound based
  on $\Theta_\varepsilon$ from Corollary~\ref{cor:counting}.}
  \label{fig:numerical-online-certificate}
\end{figure}

For completeness, the continuous homogeneous case $\ell=-0.5$, for which
$\alpha=2/3$ and $\beta=1/3$, was also evaluated with
$k_1=5.6125$ and $k_2=4.1158$, while all remaining parameters and
numerical settings were kept unchanged. Again, the gain selection was made following \cite{b:Mathey-Moreno-2024}. The simulation produced $11$ zero
crossings and $10$ complete zero-crossing intervals, with a dynamically
consistent excursion maximum reconstructed in every interval. All three
excursion-dependent estimates were satisfied without violations, with
minimum return-, rise-, and complete-interval ratios equal to $2.1366$,
$3.4805$, and $2.3304$, respectively
(see Figure~\ref{fig:excursion_bound_ell05}). The prescribed-threshold
certificate was also verified for
$\varepsilon\in\{10^{-8},10^{-7},10^{-6},10^{-5},10^{-4}\}$.
All $10$ excursions reached each tested threshold, the corresponding
return-time ratios remained above one, and the counting bounds were
satisfied in every case (see Figure~\ref{fig:numerical-online-certificate_ell05}). These results
illustrate the dependence on $A_i^\beta$ and confirm the estimates beyond
the classical case $\beta=0$.
\begin{figure}[ht!]
  \centering
  \includegraphics[width=8.0cm,keepaspectratio]
  {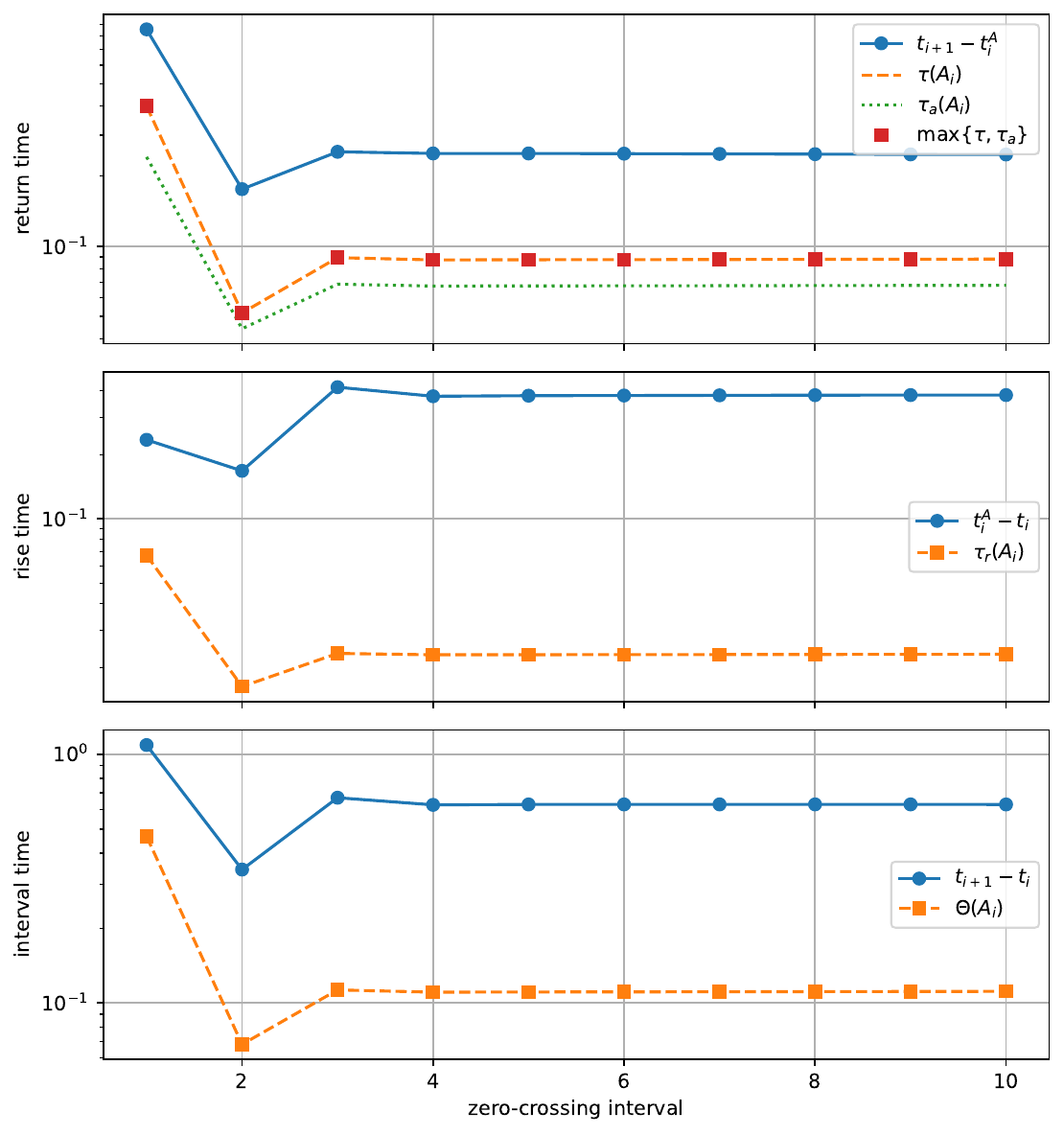}
  \caption{Excursion-dependent time estimates for $\ell=-0.5$.
  The return-, rise-, and complete-interval estimates are satisfied for
  all complete zero-crossing intervals, with minimum ratios $2.1366$,
  $3.4805$, and $2.3304$, respectively.}
  \label{fig:excursion_bound_ell05}
\end{figure}
\begin{figure}[ht!]
  \centering
  \includegraphics[width=8.5cm,keepaspectratio]
  {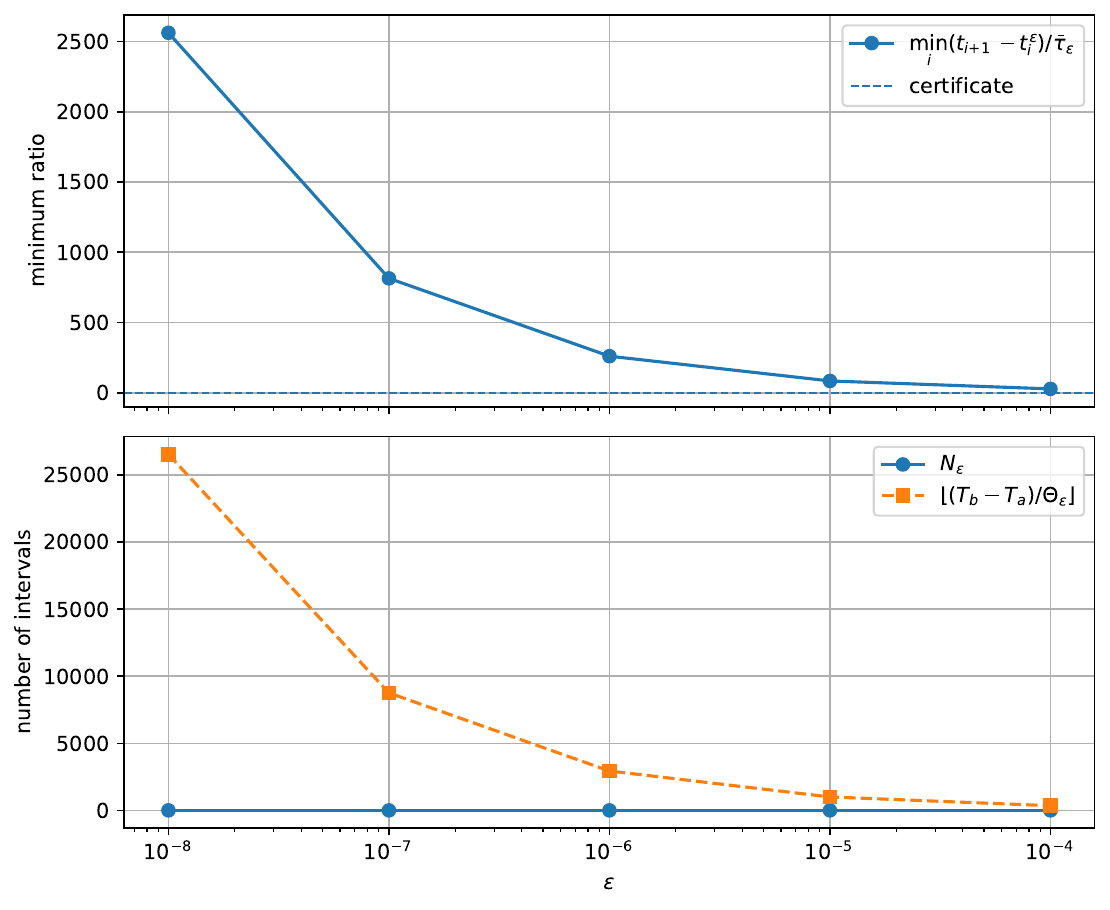}
  \caption{Prescribed-threshold certificate for $\ell=-0.5$. Top: minimum ratio
  $\min_i\Gamma_i^\varepsilon$ for each threshold. The dashed line marks
  $\Gamma_i^\varepsilon=1$. Bottom: observed number $N_\varepsilon$ of
  complete excursions reaching the threshold and the upper bound based
  on $\Theta_\varepsilon$ from Corollary~\ref{cor:counting}.}
  \label{fig:numerical-online-certificate_ell05}
\end{figure}

\vspace{-30pt}
\section{Discussion}
\label{sec:discussion}

Classical STA analyses primarily establish finite-time convergence, robustness margins, and reaching-time estimates \cite{b:levant_ijc1993,b:levant_automatica2005,b:LEVANT2007576,b:PolyakovPoznyak2009ReachingTime,b:6144710,b:SeeberHornFridman2018ReachingTime}. Lyapunov-based approaches provide upper bounds on the time required to reach the sliding set \cite{b:PolyakovPoznyak2009ReachingTime,b:6144710}, while the unperturbed analysis in \cite{b:SeeberHornFridman2018ReachingTime} resolves the zero structure as part of the reaching-time computation. The present result is complementary: it does not provide another convergence proof, but a trajectory-level certificate on how rapidly the sliding variable can complete a finite excursion and return through the origin. This temporal information is not contained in a convergence-time estimate, since two trajectories with the same reaching-time guarantee may exhibit different zero-crossing patterns before reaching the sliding motion.


The estimates presented in this paper also differ from chattering analyses based on parasitic dynamics. Describing-function and harmonic-balance methods characterize sustained oscillations created by actuator bandwidth, delays, sampling, or other implementation effects \cite{b:BoikoFridman2005,b:RosalesFreidovichCastillo2023,b:PEREZVENTURA2019175}. Here, no additional dynamics are introduced. The result therefore provides a continuous-time baseline: under the stated assumptions, zero crossings associated with a prescribed nonzero excursion cannot occur with arbitrarily small separation. Oscillations observed in an implementation must instead be analyzed in the corresponding actuator-, sampling-, or measurement-augmented model, since such effects are outside the scope of the present estimate.

The distinction from event-triggered control is equally important. Minimum inter-event times in event-triggered STA designs concern update instants generated by a selected triggering law and are used to guarantee implementability and exclude Zeno executions \cite{b:LiEtAl2020EventTriggeredSTA,b:KumarMohantyKumar2020EventTriggeredSTA,b:DouDingYu2023EventTriggeredSOSM}. In contrast, the present events are intrinsic sign changes of the sliding variable, and the controller is neither sampled nor modified. The threshold $\varepsilon$ is not a triggering parameter for the control law. It selects the smallest excursion regarded as relevant and converts it into a guaranteed separation between trajectory zero crossings.

A practical advantage is that the bounds depend only on the excursion amplitude of $\sigma$, the controller gains, and the known uncertainty bounds. The disturbance-compensated coordinate $z=\nu+\phi$ appears in the proof but not in the final estimate, and neither the perturbation amplitude nor the crossing velocity must be measured or reconstructed. Continuity alone implies that the amplitudes of any finite accumulating sequence must vanish, so excursions bounded away from zero cannot accumulate in finite time. The cumulative estimate provides the stronger quantitative statement $\sum_i A_i^\alpha<\infty$ on compact intervals. At a finite accumulation point, Corollary~\ref{cor:effective_input_accumulation} characterizes the limiting behavior of the effective input and the continuous control along the accumulating crossing sequence.

\section{Conclusion}
\label{sec:conclusion}

This work established lower bounds on the time between zero crossings of the scalar homogeneous STA, based on the amplitude of the corresponding excursion. The estimates provide minimum separation times and explicit upper bounds on the number of complete excursions above a prescribed amplitude. Although infinitely many excursions may still occur in finite time, their amplitudes must vanish and satisfy a summability condition. The paper also shows that, when zero crossings accumulate in finite time, the combined effect of the control and the matched perturbation approaches zero. Along these crossings, the control approaches the value needed to compensate the perturbation.

These results provide a quantitative description of fast zero-crossing activity in the STA, complementing existing stability, reaching-time, and chattering analyses. Extensions to multivariable, adaptive-gain, actuator, and sampled-data settings remain for future work.

\bmsubsection*{Author Contributions}

\textbf{Diego Guti\'errez-Oribio:} Conceptualization, Formal Analysis, Methodology, Validation, Visualization, Writing -- Original Draft Preparation. \textbf{Ioannis Stefanou:} Conceptualization, Formal Analysis, Funding Acquisition, Methodology, Project Administration, Writing -- Review \& Editing.

\bmsubsection*{Acknowledgments}
Funded by the European Union. Views and opinions expressed are, however, those of the author(s) only and do not necessarily reflect those of the European Union or the European Research Council Executive Agency. Neither the European Union nor the granting authority can be held responsible for them. This work is supported by the ERC grant INJECT, no. 101087771, doi: 10.3030/101087771.

\bmsubsection*{Financial Disclosure}

None reported.

\bmsubsection*{Conflicts of Interest}

The authors declare no conflicts of interest.

\bmsubsection*{Declaration of generative AI and AI-assisted technologies}

During the preparation of this work, the authors used ChatGPT for language editing and to assist in the construction of the Lean~4 formalization. The authors reviewed and edited the generated material before inclusion in the manuscript. 

\bibliography{Bibliografias}

\end{document}